\newtheorem{theorem}{Theorem}
\newtheorem{lemma}[theorem]{Lemma}
\newcommand{\bs}[1]{\boldsymbol{#1}}
\newcommand{\cl}[1]{\mathcal{#1}}
\newcommand{\bb}[1]{\mathbb{#1}}
\newcommand{\bbm}[1]{\mathbbm{#1}}
\newcommand{\lb}{\left(}
\newcommand{\rb}{\right)}
\newcommand{\ls}{\left[}
\newcommand{\rs}{\right]}
\newcommand{\lc}{\left\{}
\newcommand{\rc}{\right\}}
\newcommand{\lv}{\left\vert}
\newcommand{\rv}{\right\vert}
\newcommand{\lV}{\left\Vert}
\newcommand{\rV}{\right\Vert}
\newcommand{\diag}{\mathrm{diag}}
\newcommand{\T}{\mathsf{T}}
\crefname{figure}{Fig.}{Figs.}  
\crefname{section}{Sec.}{Secs.}
\begin{document}

\title{Convergence of Expectation-Maximization Algorithm with Mixed-Integer  Optimization}

\author{Geethu Joseph
\thanks{G. Joseph is with the Faculty of Electrical Engineering, Mathematics, and Computer Science, Delft University of Technology, Delft 2628 CD, The Netherlands.  
(email: g.joseph@tudelft.nl)}}

\markboth{Journal of \LaTeX\ Class Files}
{G. Joseph \MakeLowercase{\textit{et al.}}: Convergence Analysis of a Special Class of Expectation-Maximization Algorithm with Mixed-Integer Maximum Likelihood Optimization}
\maketitle
\begin{abstract}
The convergence of expectation-maximization (EM)-based algorithms typically requires continuity of the likelihood function with respect to all the unknown parameters (optimization variables). The requirement is not met when parameters comprise both discrete and continuous variables, making the convergence analysis nontrivial. This paper introduces a set of conditions that ensure the convergence of a specific class of EM algorithms that estimate a mixture of discrete and continuous parameters. Our results offer a new analysis technique for iterative algorithms that solve mixed-integer non-linear optimization problems. { As a concrete example, we prove the convergence of an existing EM-based sparse Bayesian learning algorithm that estimates the state of a linear dynamical system with jointly sparse inputs and bursty missing observations.} Our results establish that the algorithm converges to the set of stationary points of the maximum likelihood cost with respect to the continuous optimization variables. 
\end{abstract}
\begin{IEEEkeywords}
Discrete non-linear optimization, global convergence theorem, sparse Bayesian learning, bursty missing data
\end{IEEEkeywords}

\section{Introduction}\label{sec:intro}
The Expectation-Maximization (EM) algorithm is a general technique for maximum likelihood or maximum a posteriori estimation~\cite{ghahramani1993function,ghahramani1998learning}. It is a crucial ingredient in well-known algorithms like Baum-Welch~\cite{baum1970maximization}, inside-outside~\cite{lari1990estimation}, sparse Bayesian learning (SBL)~\cite{wipf2004sparse}, and their numerous variants. EM's popularity is due to its simplicity, stability (monotonic increase in likelihood),  and convergence guarantees for many statistical problems.  The convergence analysis of EM, presented in~\cite{wu1983convergence}, establishes conditions under which EM converges to a stationary point of the likelihood function. The literature also offers convergence analyses for specific cases, such as EM for Gaussian mixtures~\cite{xu1996convergence} and EM with squared iterative methods~\cite{varadhan2008simple}. However, these analyses assume that the likelihood function is continuous in all unknown hyperparameters, implying that the parameters belong to an open set and are all continuous. This assumption may not hold in general. For example, the EM-based SBL framework~\cite{joseph2022state} is used to estimate sparse state of a linear dynamical with missing observations, where the unknown state vector belongs to an open set, while the unknown missing data status is discrete (either missing or not missing). { Some motivating applications for this problem include identifying missing data indices in occlusions due to nonlinear energy harvesting or environmental factors in 
wireless networks (motion tracking~\cite{song2011feedback}, network traffic reconstruction~\cite{roughan2011spatio}, localization refinement~\cite{rallapalli2010exploiting}, urban traffic sensing improvement~\cite{li2011compressive}, and structural health monitoring~\cite{ji2014method,thadikemalla2017simple}), satellite imaging systems~\cite{carlavan2012satellite,scalise2008measurement}, and downlink channel estimation via feedback through a bursty channel~\cite{joseph2022state}.} Despite demonstrating good empirical performance of the EM algorithm~\cite{joseph2022state}, its convergence becomes nontrivial due to discrete parameters, posing a challenge to existing EM convergence results. Insipred by this setting, we study the convergence of the EM algorithm with both continuous and discrete hyperparameters.

The contributions of this paper are twofold. First, we relax the assumption that the set of unknowns estimated by EM is purely continuous, allowing for a general set that comprises both continuous and discrete parameters. We derive mild conditions ensuring the convergence of EM to a stationary point of the likelihood function. Notably, when the unknowns belong to an open set, our results reduce to those of~\cite{wu1983convergence}. Second, we apply these results to establish the convergence of the EM-based SBL algorithm presented in~\cite{joseph2022state}.

\section{Statistical Model and Convergence Result}
Consider the statistical model which generates observations $\bs{Y}$, unobserved latent data $\bs{X}$, and unknown parameters $\bs{\theta}^*\in\bs{\Theta}$. We assume that a part of the parameter is continuous and the other part is discrete and finite, i.e., $\bs{\Theta}\subseteq\bb{G}\times \bb{H}$ and
\begin{equation}\label{eq:theunknown}
\bs{\theta}^*=\begin{bmatrix}
\bs{\gamma}^{\T}&{ \bs{\alpha}}^{\T}
\end{bmatrix}^{\T},
\end{equation} where $\bs{\gamma}\in\bb{G}$ and ${ \bs{\alpha}}\in \bb{H}$. Here, $\bb{G}\subseteq\bb{R}^N$ is a set without isolated points, and $\bb{H}\subseteq\bb{R}^K$ is a finite (countable) set. For the ease of exposition, we use $\bs{\theta}$ and $(\bs{\gamma},{ \bs{\alpha}})$ interchangeably.

Let $p(\bs{Y},\bs{X};\bs{\theta}^*)$ be the joint distribution of data $(\bs{Y},\bs{X})$ conditioned on the parameter $\bs{\theta}^*$. Then, the maximum likelihood (ML) estimate of the unknown $\bs{\theta}^*$ is
\begin{equation}\label{eq:overall_opt}
\underset{\bs{\theta}\in\bs{\Theta}}{\arg\max}\;L(\bs{\theta}),
\end{equation}
where $L(\bs{\theta})$ is the likelihood function given by 
\begin{equation}
    L(\bs{\theta})=\log p(\bs{Y};\bs{\theta}) = \bb{E}_{\bs{X} }\{\log p(\bs{Y},\bs{X};\bs{\theta})\}.
\end{equation}
If the above optimization is not tractable, we can use the EM algorithm to solve \eqref{eq:overall_opt}. The EM algorithm is an iterative algorithm, with each iteration comprising an expectation step (E-step) and a maximization step (M-step). Let $\bs{\theta}^{(r)}=[\bs{\gamma}^{(r)\T}\;\;\;{ \bs{\alpha}}^{(r)\T}]^{\T}$ be the EM iterate in the $(r-1)$th iteration. In the $r$th iteration, the E-step computes the expected log-likelihood of $\bs{\theta}$ with respect to the distribution of $\bs{X}$ conditioned on the current iterate $\bs{\theta}^{(r)}$ and observations $\bs{Y}$,
{\begin{equation}\label{eq:Q_defn}
    Q(\bs{\theta};\bs{\theta}^{(r)})= \bb{E}_{\bs{X}|\bs{Y},\bs{\theta}^{(r)} }\{\log p(\bs{Y},\bs{X};\bs{\theta})\}.
\end{equation}
The M-step maximizes this function with respect to $\bs{\theta}$ to obtain the new iterate $\bs{\theta}^{(r+1)}$, i.e.,
\begin{equation}
\underset{\bs{\theta}\in\bs{\Theta}}{\arg\max} \; Q( \bs{\theta};\bs{\theta}^{(r)}).
\end{equation}}
So, the $r$-th EM iteration can be summarized as a mapping $\cl{G}:\bs{\Theta}\to\bs{\Theta}$, i.e.,
\begin{equation}\label{eq:MStep}
\bs{\theta}^{(r+1)} = \cl{G}(\bs{\gamma}^{(r)},{ \bs{\alpha}}^{(r)}) = \underset{\bs{\theta}\in\bs{\Theta}}{\arg\max} \; Q(\bs{\theta};\bs{\theta}^{(r)}).
\end{equation}

Next, we present the main result of the section, which provides a list of sufficient conditions for the EM algorithm to converge to a stationary point of the ML cost function in~\eqref{eq:overall_opt}.
\begin{theorem}\label{thm:general_convergence}
Let $\{\bs{\theta}^{(r)}\}_{r=0}^{\infty}$ be the sequence generated by the EM algorithm, as summarized in \eqref{eq:MStep}, to solve the ML optimization problem in \eqref{eq:overall_opt}. Assume the following conditions,
\begin{enumerate}
\item There exists a constant $C\in\bb{R}$ such that $L(\bs{\theta}^{(0)})\leq C$, for any $\bs{\theta}^{(0)}\in\bs{\Theta}$. \label{assmp:lowerbnd}
\item The level set $\{\bs{\theta}: L(\bs{\theta})\geq L(\bs{\theta}^{(r)})\}$ is compact, for any integer $r > 0$.\label{assmp:compact}
\item For any ${ \bs{\alpha}}_*\in\bb{H}$, the iteration mapping $\cl{G}(
\bs{\gamma}, { \bs{\alpha}}_* )$ in \eqref{eq:MStep} is closed at all values of $[\bs{\gamma}^{\T}\;\;\;  { \bs{\alpha}}_*^{\T}]^{\T}\in\bs{\Theta}$.\label{assmp:closed}
\item The function $Q(\bs{\theta};\bs{\theta}^{(r)})$ is a continuous
function of $\bs{\gamma}$ and $\bs{\gamma}^{(r)}$, for a fixed value of ${ \bs{\alpha}}$ and ${ \bs{\alpha}}^{(r)}$, where $\bs{\theta}$ and $\bs{\theta}^{(r)}$ take the form \eqref{eq:theunknown} and \eqref{eq:MStep}, respectively. \label{assmp:contin}
\end{enumerate}
Then, the sequence of iterates $\{\bs{\theta}^{(r)}\}_{r=1}^{\infty}$ converges to a subset of $\cl{S}_*$ over which $L(\bs{\theta})$ is a constant. Here,
\begin{equation}
\cl{S}_* = \{ \bs{\theta}\in\bs{\Theta}: \bs{\theta} = \begin{bmatrix}
\bs{\gamma}\in\bb{R}^{N} & { \bs{\alpha}}\in\bb{H}
\end{bmatrix}\; \text{and}\; \nabla_{\bs{\gamma}} L(\bs{\theta})=\bs{0} \},
\end{equation}
where $\nabla_{\bs{\gamma}}$ denote the gradient with respect to $\bs{\gamma}$.
\end{theorem}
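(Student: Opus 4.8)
The plan is to recover a Zangwill-type global convergence theorem in this mixed-integer setting by exploiting the finiteness of $\bb{H}$: along any convergent subsequence of iterates the discrete block is eventually constant, which ``freezes'' it and reduces the argument to a single continuous slice, on which the machinery of~\cite{wu1983convergence} applies. \textbf{Step 1 (ascent and boundedness).} Writing $\log p(\bs{Y},\bs{X};\bs{\theta}) = \log p(\bs{Y};\bs{\theta}) + \log p(\bs{X}|\bs{Y};\bs{\theta})$ and taking $\bb{E}_{\bs{X}|\bs{Y},\bs{\theta}^{(r)}}$ gives the decomposition $Q(\bs{\theta};\bs{\theta}^{(r)}) = L(\bs{\theta}) + H(\bs{\theta};\bs{\theta}^{(r)})$, where $H(\bs{\theta};\bs{\theta}^{(r)}) = \bb{E}_{\bs{X}|\bs{Y},\bs{\theta}^{(r)}}\{\log p(\bs{X}|\bs{Y};\bs{\theta})\}$. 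Gibbs' inequality gives $H(\bs{\theta};\bs{\theta}^{(r)}) \le H(\bs{\theta}^{(r)};\bs{\theta}^{(r)})$ while the M-step~\eqref{eq:MStep} gives $Q(\bs{\theta}^{(r+1)};\bs{\theta}^{(r)}) \ge Q(\bs{\theta}^{(r)};\bs{\theta}^{(r)})$, so $L(\bs{\theta}^{(r+1)}) \ge L(\bs{\theta}^{(r)})$. By Conditions~\ref{assmp:lowerbnd}--\ref{assmp:compact} the sequence $\{L(\bs{\theta}^{(r)})\}$ is nondecreasing and bounded, hence converges to some $L_\infty\in\bb{R}$, and every iterate with $r\ge 1$ lies in the compact set $\cl{K}=\{\bs{\theta}:L(\bs{\theta})\ge L(\bs{\theta}^{(1)})\}$.

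\textbf{Step 2 (passing to the limit).} Let $\bs{\theta}^{(\infty)}$ be the limit of a convergent subsequence $\{\bs{\theta}^{(r_k)}\}$ (which exists since $\cl{K}$ is compact), with continuous block $\bs{\gamma}^{(\infty)}$ and discrete block $\bs{z}_*\in\bb{H}$. Since $\bb{H}$ is finite its points are isolated, so $\bs{z}^{(r_k)}=\bs{z}_*$ for all large $k$; passing to a further subsequence, $\bs{\theta}^{(r_k+1)}\to\bs{\theta}^{\dagger}$ for some $\bs{\theta}^{\dagger}\in\cl{K}$. As $\bs{\theta}^{(r_k+1)}\in\cl{G}(\bs{\gamma}^{(r_k)},\bs{z}_*)$ and $\cl{G}(\cdot,\bs{z}_*)$ is closed at $\bs{\theta}^{(\infty)}$ by Condition~\ref{assmp:closed}, we conclude $\bs{\theta}^{\dagger}\in\cl{G}(\bs{\theta}^{(\infty)})$. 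Condition~\ref{assmp:contin} makes $Q$, and therefore $L(\bs{\theta})=Q(\bs{\theta};\bs{\theta})-H(\bs{\theta};\bs{\theta})$, continuous in $\bs{\gamma}$ for fixed $\bs{z}$, so $L(\bs{\theta}^{(\infty)})=\lim_k L(\bs{\theta}^{(r_k)})=L_\infty$ and likewise $L(\bs{\theta}^{\dagger})=L_\infty$.

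\textbf{Step 3 (stationarity and conclusion).} The equality $L(\bs{\theta}^{\dagger})=L(\bs{\theta}^{(\infty)})$, together with the Step-1 decomposition with conditioning parameter $\bs{\theta}^{(\infty)}$ applied at $\bs{\theta}^{\dagger}$ and at $\bs{\theta}^{(\infty)}$, forces $Q(\bs{\theta}^{\dagger};\bs{\theta}^{(\infty)})=Q(\bs{\theta}^{(\infty)};\bs{\theta}^{(\infty)})$; since $\bs{\theta}^{\dagger}$ maximizes $Q(\cdot;\bs{\theta}^{(\infty)})$ over $\bs{\Theta}$, so does $\bs{\theta}^{(\infty)}$, and in particular $\bs{\gamma}^{(\infty)}$ maximizes $\bs{\gamma}\mapsto Q(\bs{\theta};\bs{\theta}^{(\infty)})$ over the slice on which the discrete block equals $\bs{z}_*$. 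Because $\bb{G}$ has no isolated points, the first-order optimality condition gives $\nabla_{\bs{\gamma}}Q(\bs{\theta};\bs{\theta}^{(\infty)})|_{\bs{\theta}=\bs{\theta}^{(\infty)}}=\bs{0}$; since $\nabla_{\bs{\gamma}}H(\bs{\theta};\bs{\theta}^{(\infty)})$ vanishes at $\bs{\theta}=\bs{\theta}^{(\infty)}$ (differentiating $\int p(\bs{x}|\bs{Y};\bs{\theta})\,d\bs{x}=1$ under the integral sign), the decomposition yields $\nabla_{\bs{\gamma}}L(\bs{\theta}^{(\infty)})=\bs{0}$, i.e. $\bs{\theta}^{(\infty)}\in\cl{S}_*$. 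Thus every limit point of $\{\bs{\theta}^{(r)}\}$ lies in $\cl{S}_*$ and has $L$-value $L_\infty$, and since the iterates are bounded they converge to the (nonempty, closed) set of their limit points, which is the claimed subset of $\cl{S}_*$ on which $L$ is constant.

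\textbf{Expected main obstacle.} The delicate step is Step~2 --- transferring closedness through the limit while an integer coordinate is present --- and this is precisely what Conditions~\ref{assmp:closed} and~\ref{assmp:contin} are designed to enable: finiteness of $\bb{H}$ pins the integer block along every convergent subsequence, after which the closedness of $\cl{G}(\cdot,\bs{z}_*)$ and the continuity of $Q$ reduce the analysis to the classical continuous-parameter case. Secondary technical points are the interchange of differentiation and integration behind the diagonal identity for $\nabla_{\bs{\gamma}}H$ (a smoothness condition implicit in Condition~\ref{assmp:contin}) and the fact that the conclusion only asserts convergence to the set of limit points, since a priori the iterates may keep switching among several integer blocks.
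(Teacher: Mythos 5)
Your proof is correct and follows essentially the same route as the paper: the $L = Q + g$ decomposition with Gibbs' inequality for monotone ascent, compact level sets for subsequential limits, finiteness of $\bb{H}$ to freeze the discrete block along a convergent subsequence, and closedness of $\cl{G}(\cdot,\bs{z}_*)$ to place the limit of the shifted subsequence in $\cl{G}(\bs{\theta}^{(\infty)})$. The only difference is cosmetic and lies in the last step: the paper proves a strict-increase lemma at non-stationary points and concludes by contradiction, whereas you argue directly that equality of the $L$-values forces $\bs{\theta}^{(\infty)}$ to maximize its own $Q$-function and then invoke first-order optimality together with $\nabla_{\bs{\gamma}} g = \bs{0}$ at the conditioning point --- the same ingredients in direct rather than contrapositive form (and both versions share the paper's implicit differentiability/interiority assumptions).
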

\begin{proof}
See \Cref{app:general_convergence}.
\end{proof}
Here, Conditions 1 and 2  refer to the likelihood, and Conditions 3 and 4 are linked to the iterate update procedure. Conditions 1, 2, and 4 are similar to (8), (6), and (7) in \cite{wu1983convergence}, respectively. Condition 3 is required for the global convergence theorem~\cite{zangwill1969nonlinear} to hold. Further, our analysis generalizes the EM convergence result in \cite{wu1983convergence}.  Using a similar proof technique, we can extend the other results in \cite{wu1983convergence} to our setting (for example, the results on generalized EM). Furthermore, if the stationary points of the likelihood cost are isolated, the algorithm converges to a single point. Also, we assume that $\bb{H}$ is finite, such as any bounded subset of integers or rational numbers. There are estimation problems where this assumption holds, and we give an example in the next section.

\section{Analysis of Kalman SBL for State Estimation of a Linear System with Missing Outputs}
In this section, we prove convergence guarantees to the EM-based SBL algorithm in ~\cite{joseph2022state} using \Cref{thm:general_convergence}. The algorithm aims to estimate the states of a linear dynamical system with jointly sparse inputs and missing observations at unknown time instants~\cite{joseph2022state}. Specifically, we consider a discrete-time linear dynamical system given by
\begin{equation}\label{eq:lds}
\bs{x}_k = \bs{D}\bs{x}_{k-1}+\bs{u}_k\;\;\;\text{and}\;\;\;
\bs{y}_k = { \bs{\alpha}}_k^* \bs{A}\bs{x}_k+\bs{w}_k.
\end{equation}
Here, $\bs{x}_k\in\bb{R}^n,\bs{u}_k\in\bb{R}^n$, and $\bs{y}_k\in\bb{R}^m$ are the state, input, and observation at time $k$, respectively. Also, $\bs{w}_k\in\bb{R}^m$ is the zero-mean Gaussian distributed measurement noise at time $k$ whose variance is $\sigma^2$. Also, $\bs{D}\in\bb{R}^{n\times n}$ is the state transition matrix, and  $\bs{A}\in\bb{R}^{m\times n}$ is the output matrix. The inputs are jointly sparse, and the initial state $\bs{x}_0=\bs{0}\in\bb{R}^n$. Further, ${ \bs{\alpha}}_k^*\in\{0,1\}$ represents whether the signal part $ \bs{A}\bs{x}_k$ is missing or not in the observation $\bs{y}_k$. The missing data indicator ${ \bs{\alpha}}_k^*$ for $k=1,2,\ldots$ follows a hidden Markov model: for any integer $k>0$ and $i,j\in\{ 0,1\}$, we have $\bb{P}\{{ \bs{\alpha}}_k^*=i|{ \bs{\alpha}}_{k-1}^*=j\} =
p_j$, if $i=j$. For a given integer value of $K<\infty$, the algorithm aims to estimate the state matrix $\bs{X}$ using the output matrix $\bs{Y}$  when ${ \bs{\alpha}}^*$ is unknown, where we define
\begin{align}
\bs{X} &\triangleq \begin{bmatrix}
\bs{x}_1 & \bs{x}_2 & \ldots & \bs{x}_K 
\end{bmatrix}\in\bb{R}^{N\times K}\\
\bs{Y} &\triangleq \begin{bmatrix}
\bs{y}_1 & \bs{y}_2 & \ldots & \bs{y}_K 
\end{bmatrix}\in\bb{R}^{m\times K}\\
{ \bs{\alpha}}^* &\triangleq \begin{bmatrix}
{ \bs{\alpha}}_1^* & { \bs{\alpha}}_2^* & \ldots & { \bs{\alpha}}_K^* 
\end{bmatrix}^{\intercal}\in\{0,1\}^K.
\end{align}

The SBL framework assumes a fictitious Gaussian prior on the sparse vectors with a common matrix with $\bs{\gamma}^*\in\bb{R}_+^{N}$ along the diagonal, i.e., $p(\bs{x}_k|\bs{\gamma}^*)=\cl{N}(\bs{0},\diag\{\bs{\gamma}^*\})$. Then, we jointly estimate $\bs{\theta}^*=[\bs{\gamma}^*\;\;{ \bs{\alpha}}^*]\in\bb{R}_+^{N}\times\{0,1\}^K$. The resulting $Q$ function in the M-step is
\begin{align}\label{eq:Q_fnt_sbl}
Q( \bs{\theta};\bs{\theta}^{(r)})
&= \bb{E}_{ \bs{X}  |\bs{Y},\bs{\theta}^{(r)}}
\{ \log \ls 
p\lb\bs{Y}|\bs{X},{ \bs{\alpha}}\rb p\lb{ \bs{\alpha}}\rb p\lb\bs{X},\bs{\gamma}\rb
\rs\}\\
&=\bb{E}_{\bs{X}|\bs{Y},\bs{\theta}^{(r)}\!\!}\{ \log p\lb\bs{Y}\middle|\bs{X},{ \bs{\alpha}}\rb\}
+\log p\lb{ \bs{\alpha}}\rb\notag\\
&\;\;\;+\bb{E}_{\bs{X}|\bs{Y},\bs{\theta}^{(r)}}\{\log p\lb\bs{X},\bs{\gamma}\rb\}.
\end{align}
So, the optimization problem in the M-step is separable in the $\bs{\gamma}$ and ${ \bs{\alpha}}$. Since the optimization is separable, the mapping $\cl{G}$ in \eqref{eq:MStep} can be decomposed as follows:
\begin{equation}\label{eq:mapping_sbl}
 \begin{bmatrix}
\bs{\gamma}^{(r+1)}\\
{ \bs{\alpha}}^{(r+1)}
\end{bmatrix} \in \cl{G}(
\bs{\gamma}^{(r)}, { \bs{\alpha}}^{(r)}) = \begin{bmatrix}
\cl{G}_{\gamma}\lb \bs{\gamma}^{(r)},{ \bs{\alpha}}^{(r)}\rb \subset\bb{R}^n_+\\
\cl{G}_{{\alpha}}\lb \bs{\gamma}^{(r)},{ \bs{\alpha}}^{(r)}\rb \subset\bb{H}
\end{bmatrix}.
\end{equation}
The resulting algorithm uses the Kalman smoothing to compute $\cl{G}_{\gamma}\lb \bs{\gamma}^{(r)},{ \bs{\alpha}}^{(r)}\rb$ and Viterbi algorithm to $\cl{G}_{{\alpha}}\lb \bs{\gamma}^{(r)},{ \bs{\alpha}}^{(r)}\rb$. 

Further, the set $\cl{G}_{\gamma}\lb \bs{\gamma}^{(r)},{ \bs{\alpha}}^{(r)}\rb$ is a singleton set whereas $\cl{G}_{{\alpha}}\lb \bs{\gamma}^{(r)},{ \bs{\alpha}}^{(r)}\rb$ need not be. Also, for a given ${ \bs{\alpha}}^{(r)}$, the mapping $\cl{G}( \bs{\gamma}^{(r)},{ \bs{\alpha}}^{(r)})$ is continuous in $\bs{\gamma}^{(r)}$.  Please refer to \cite{joseph2022state} for more details. Using the above properties of the algorithm, we derive the convergence results.

We start with the optimization problem~\eqref{eq:overall_opt} that the algorithm solves. To compute $L(\bs{\theta})=\log p(\bs{Y};\bs{\theta})$, we note that $\bs{Y}$ is Gaussian distributed with zero mean. Given $\bs{\theta}$, we derive $\bs{y}_k={ \bs{\alpha}}_k \bs{A}\sum_{j=1}^{k}\bs{D}^{k-j}\bs{u}_j+\bs{w}_k$ from \eqref{eq:lds}. 
Subsequently, the covariance matrix $\bs{R}_{Y}(\bs{\theta}) \in\bb{R}^{Km\times Km}$ of the vectorized version of $\bs{Y}$ is
\begin{multline}\label{eq:Ry}
\bs{R}_{Y}(\bs{\theta}) =\lb \diag\{{ \bs{\alpha}}\}\otimes \bs{A}\rb\tilde{\bs{D}}\\
\times \lb\bs{I}\otimes \diag\{\bs{\gamma}\}\rb\tilde{\bs{D}}^\T\lb \diag\{{ \bs{\alpha}}\}\otimes \bs{A}^\T\rb,
\end{multline}
where $\diag\{\cdot\}$ represents a diagonal matrix with entries of the argument vector along the diagonal
and 
$\tilde{\bs{D}} \in\bb{R}^{KN\times KN}$ as
\begin{equation}\label{eq:Rx}
\tilde{\bs{D}}  =  \begin{bmatrix}
\bs{I} &\bs{0} & \ldots&\bs{0}\\
\bs{D}&\bs{I} & \ldots&\bs{0}\\
\vdots &&\ddots\\
\bs{D}^{K-1}&\bs{D}^{K-2}&\ldots&\bs{I}
\end{bmatrix} .
\end{equation}
By simplifying \eqref{eq:overall_opt} using \eqref{eq:Ry}, the objective function of an optimization problem equivalent to \eqref{eq:overall_opt} reduces to 
\begin{multline}\label{eq:cost}
L(\bs{\theta}) = -\frac{Km}{2}\log(2\pi)-\frac{1}{2}\log \lv\bs{R}_{Y}(\bs{\theta})+\sigma^2\bs{I}\rv \\
- \frac{1}{2}\bs{y} ^\T\lb \bs{R}_{Y}(\bs{\theta})+\sigma^2\bs{I}\rb^{-1}\bs{y}.
\end{multline}

With the objective function defined, we are now ready to state the convergence result. 
\begin{theorem}\label{thm:convergence_sbl}
Suppose that the noise variance $\sigma^2>0$. Let the iterates generated by the Bayesian state estimation algorithm in ~\cite{joseph2022state}  be $\bs{\theta}^{(r)}\in\bb{R}^{N}\times\bb{H}$.  Then, the sequence of iterates $\{\bs{\theta}^{(r)}\}_{r=1}^{\infty}$ converges to a subset of $\cl{S}_*$ over which $L(\bs{\theta})$ is a constant. Here,
\begin{equation}
\cl{S}_* = \{ \bs{\theta} = \begin{bmatrix}
\bs{\gamma}\in\bb{R}^{N} & { \bs{\alpha}}\in\bb{H}
\end{bmatrix}: \nabla_{\bs{\gamma}} L(\bs{\theta})=\bs{0} \},
\end{equation}
where $\nabla_{\bs{\gamma}}$ denote the gradient with respect to $\bs{\gamma}$.
\end{theorem}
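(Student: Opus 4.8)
The plan is to derive \Cref{thm:convergence_sbl} as a corollary of \Cref{thm:general_convergence}: first confirm that the Kalman SBL recursion of~\cite{joseph2022state} is exactly the EM map~\eqref{eq:MStep} for the model~\eqref{eq:lds}, then verify Conditions~1--4 of \Cref{thm:general_convergence} for the ML cost $L$ in~\eqref{eq:cost}. The identification is already prepared in~\eqref{eq:Q_fnt_sbl}--\eqref{eq:mapping_sbl}: the $Q$-function separates additively over $\bs{\gamma}$ and $\bs{z}$, the $\bs{\gamma}$-block is maximized by Kalman smoothing and is single-valued, and the $\bs{z}$-block is the Viterbi maximization over the finite set $\bb{H}=\{0,1\}^K$; thus the continuous block $\bs{\gamma}$ ranges over $\bb{R}_+^N$, a set without isolated points, and $\bb{H}$ is finite, as the theorem requires.

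Condition~1 should be immediate: since $\sigma^2>0$ and $\bs{R}_{Y}(\bs{\theta})\succeq\bs{0}$, we have $\bs{R}_{Y}(\bs{\theta})+\sigma^2\bs{I}\succeq\sigma^2\bs{I}$, so $\log\lv\bs{R}_{Y}(\bs{\theta})+\sigma^2\bs{I}\rv\ge Km\log\sigma^2$, while the quadratic term in~\eqref{eq:cost} is nonpositive; hence $L(\bs{\theta})\le -\tfrac{Km}{2}\log(2\pi)-\tfrac{Km}{2}\log\sigma^2=:C$ for every $\bs{\theta}$. For Condition~4, I would use that for fixed $\bs{z}^{(r)}$ the conditional law $p(\bs{X}\mid\bs{Y},\bs{\theta}^{(r)})$ is Gaussian with mean and covariance produced by the Kalman smoother, and that---because $\sigma^2>0$ keeps every matrix inverse bounded---these moments are continuous (indeed rational) in $\bs{\gamma}^{(r)}$, even at boundary points where some $\gamma_i^{(r)}=0$, provided one writes the posterior covariance in the usual form $\diag\{\bs{\gamma}\}-\diag\{\bs{\gamma}\}(\cdots)^{-1}\diag\{\bs{\gamma}\}$. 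Substituting them into~\eqref{eq:Q_fnt_sbl} exhibits $Q(\bs{\theta};\bs{\theta}^{(r)})$ as a finite combination of quadratic forms and log-determinant terms that is continuous in $\bs{\gamma}$ and in these conditional moments, hence in $\bs{\gamma}^{(r)}$, for fixed $\bs{z}$ and $\bs{z}^{(r)}$. Condition~3 then follows from the same regularity: for fixed $\bs{z}_*$, the map $\cl{G}_{\gamma}(\cdot,\bs{z}_*)$ is single-valued and continuous, while $\cl{G}_{z}(\cdot,\bs{z}_*)$ is the set of maximizers over the \emph{finite} set $\bb{H}$ of a function continuous in $\bs{\gamma}$, which is a closed point-to-set map by the standard $\arg\max$ argument (a convergent selection in a finite set is eventually constant); a Cartesian product of a continuous single-valued map with a closed point-to-set map is closed, so $\cl{G}(\cdot,\bs{z}_*)$ is closed at every $[\bs{\gamma}^{\T}\ \bs{z}_*^{\T}]^{\T}$.

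The hard part will be Condition~2, compactness of the level sets $\{\bs{\theta}:L(\bs{\theta})\ge L(\bs{\theta}^{(r)})\}$. Since $\bb{H}$ is finite, such a set is a finite union of slices $\{\bs{\gamma}:L(\bs{\gamma},\bs{z})\ge L(\bs{\theta}^{(r)})\}\times\{\bs{z}\}$, each closed by continuity of $L$ in $\bs{\gamma}$, so the issue is purely boundedness: I must show $L(\bs{\gamma},\bs{z})\to-\infty$ as $\lV\bs{\gamma}\rV\to\infty$ for each $\bs{z}$. Writing $\bs{R}_{Y}(\bs{\theta})=\sum_{i=1}^{N}\gamma_i\,\bs{M}_i(\bs{z})$ with $\bs{M}_i(\bs{z})=\bs{B}_i(\bs{z})\bs{B}_i(\bs{z})^{\T}\succeq\bs{0}$ read off from~\eqref{eq:Ry}, I would bound $\bs{R}_{Y}(\bs{\theta})+\sigma^2\bs{I}\succeq\gamma_i\bs{M}_i(\bs{z})+\sigma^2\bs{I}$ for each $i$, so that $\log\lv\bs{R}_{Y}(\bs{\theta})+\sigma^2\bs{I}\rv\ge\max_i\log\lv\gamma_i\bs{M}_i(\bs{z})+\sigma^2\bs{I}\rv\to+\infty$ as $\max_i\gamma_i\to\infty$ whenever every $\bs{M}_i(\bs{z})$ is nonzero; combined with the quadratic term lying in $[-\tfrac{1}{2\sigma^2}\lV\bs{y}\rV^2,0]$, this forces $L\to-\infty$. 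The remaining and genuinely delicate step is to certify $\bs{M}_i(\bs{z})\ne\bs{0}$ for the relevant $\bs{z}$: exploiting that $\tilde{\bs{D}}$ in~\eqref{eq:Rx} is block lower-triangular with identity diagonal blocks, hence invertible, one finds $\bs{B}_i(\bs{z})=\bs{0}$ forces $z_k\bs{A}\bs{D}^{k-j}\bs{e}_i=\bs{0}$ for all $k\ge j$, which already (take $j=k$) makes the $i$-th column of $\bs{A}$ vanish as soon as $\bs{z}\ne\bs{0}$; hence $\bs{M}_i(\bs{z})\ne\bs{0}$ for every $i$ whenever $\bs{z}\ne\bs{0}$ and $\bs{A}$ has no zero columns. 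The degenerate case $\bs{z}=\bs{0}$ (all outputs missing) makes $L$ independent of $\bs{\gamma}$, and is handled by noting that the iterates then cannot leave $\bs{z}=\bs{0}$; modulo this non-degeneracy bookkeeping, Condition~2 holds. With Conditions~1--4 verified, \Cref{thm:general_convergence} applies directly and yields convergence of $\{\bs{\theta}^{(r)}\}$ to a subset of $\cl{S}_*$ on which $L$ is constant.
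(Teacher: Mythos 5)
Your proposal takes essentially the same route as the paper: \Cref{thm:convergence_sbl} is obtained as a corollary of \Cref{thm:general_convergence} by verifying Conditions~1--4, with the same bound $L(\bs{\theta})\leq-\tfrac{Km}{2}\log(2\pi\sigma^2)$ for Condition~1, the same decomposition of the level set into finitely many $\bs{z}$-slices plus coercivity of $L$ in $\bs{\gamma}$ for Condition~2, the same split of $\cl{G}$ into the single-valued continuous Kalman-smoothing map and the finite Viterbi argmax for Condition~3, and the same Gaussian/Kalman-smoother continuity argument for Condition~4. The only substantive difference is that you are more careful than the paper on two points: for Condition~2 the paper's coercivity lemma (\Cref{lem:compact}) simply asserts $\log\lv\bs{R}_Y(\bs{\theta})+\sigma^2\bs{I}\rv\to\infty$ as $\lV\bs{\gamma}\rV\to\infty$, whereas you make explicit the non-degeneracy this actually needs ($\bs{M}_i(\bs{z})\neq\bs{0}$, i.e.\ $\bs{z}\neq\bs{0}$ and $\bs{A}$ without zero columns) and flag the degenerate all-missing case that the paper silently ignores; and for Condition~3 you use the standard closedness-of-argmax argument over a finite set, which is a cleaner (and safer) version of the paper's indicator-function limit computation.
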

\begin{proof}
See \Cref{app:convergence_sbl}.
\end{proof}

Hence, our generalized result in \Cref{thm:general_convergence} guarantees that the SBL variant in~\cite{joseph2022state} with discrete parameters converges to a stationary point of the ML cost function. 

\section{Conclusion}
We derived the conditions for the convergence of the EM algorithm with discrete unknown parameters. As an illustration, we demonstrated the convergence of the EM-based SBL algorithm outlined in \cite{joseph2022state}, proving its convergence to the set of stationary points of the maximum likelihood cost. Extending the results to the generalized class of Majorization-Minimization algorithms is an interesting future work.

\appendices
\crefalias{section}{appendix}
\section{Proof of \Cref{thm:general_convergence}}\label{app:general_convergence}
Our proof is adapted from the proofs of EM algorithm convergence in \cite{wu1983convergence} and Zangwill's convergence theorem~\cite{zangwill1969nonlinear}. Our proof relies on some properties of the algorithm iterates as listed by the following preliminary lemmas:

\begin{lemma}[{~\cite[Theorem 8.1]{little2019statistical}}]\label{lem:splitL}
The EM algorithm formulation guarantees the following from \eqref{eq:overall_opt}:
\begin{equation}\label{eq:L_split}
L( \bs{\theta})
= Q( \bs{\theta};\bs{\theta}^{(r)}) +
g( \bs{\theta};\bs{\theta}^{(r)}),
\end{equation}
where $Q$ is defined in \eqref{eq:Q_defn} and we define 
\begin{equation}\label{eq:g_defn}
g( \bs{\theta};\bs{\theta}^{(r)}) =  \bb{E}_{\bs{X}  |\bs{Y},\bs{\theta}^{(r)}}\{-\log p(\bs{X}|\bs{Y},\bs{\theta})\}.
\end{equation} 
\end{lemma}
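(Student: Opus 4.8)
The plan is to derive the decomposition \eqref{eq:L_split} directly from the chain rule of probability and the linearity of conditional expectation; the statement is essentially a bookkeeping identity and the proof quoted from \cite{little2019statistical} is short. First I would factor the complete-data density as $p(\bs{Y},\bs{X};\bs{\theta}) = p(\bs{X}|\bs{Y},\bs{\theta})\,p(\bs{Y};\bs{\theta})$, which is valid for every $\bs{\theta}\in\bs{\Theta}$ with $p(\bs{Y};\bs{\theta})>0$; the degenerate case $p(\bs{Y};\bs{\theta})=0$ can be discarded since there $L(\bs{\theta})=-\infty$ and the iterates never visit it. Taking logarithms and rearranging then gives the pointwise-in-$\bs{X}$ identity
\begin{equation*}
\log p(\bs{Y};\bs{\theta}) = \log p(\bs{Y},\bs{X};\bs{\theta}) - \log p(\bs{X}|\bs{Y},\bs{\theta}),
\end{equation*}
which holds for almost every realization of $\bs{X}$.

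Next I would take the expectation of both sides with respect to the conditional law of $\bs{X}$ given $\bs{Y}$ and the current iterate $\bs{\theta}^{(r)}$. The left-hand side equals $L(\bs{\theta})=\log p(\bs{Y};\bs{\theta})$ and does not depend on $\bs{X}$, so it is unaffected by the expectation. On the right-hand side, linearity of expectation separates the two terms: the first is $\bb{E}_{\bs{X}|\bs{Y},\bs{\theta}^{(r)}}\{\log p(\bs{Y},\bs{X};\bs{\theta})\} = Q(\bs{\theta};\bs{\theta}^{(r)})$ by the definition in \eqref{eq:Q_defn}, and the second is $\bb{E}_{\bs{X}|\bs{Y},\bs{\theta}^{(r)}}\{-\log p(\bs{X}|\bs{Y},\bs{\theta})\} = g(\bs{\theta};\bs{\theta}^{(r)})$ by the definition in \eqref{eq:g_defn}. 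Adding the two recovers \eqref{eq:L_split}.

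The only step needing care is the measure-theoretic well-posedness of the two expectations, and I expect this rather than any algebra to be the substantive point: the integrand $\log p(\bs{X}|\bs{Y},\bs{\theta})$ must be defined $p(\cdot\,|\bs{Y},\bs{\theta}^{(r)})$-almost surely, which holds once the conditional law of $\bs{X}$ under $\bs{\theta}^{(r)}$ is absolutely continuous with respect to that under $\bs{\theta}$ (a standard EM regularity condition), and $Q(\bs{\theta};\bs{\theta}^{(r)})$ and $g(\bs{\theta};\bs{\theta}^{(r)})$ must be finite so that their sum is unambiguous. These hypotheses are exactly what the cited reference \cite{little2019statistical} bundles into its statement, so the proof reduces to the two displayed lines above.
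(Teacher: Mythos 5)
Your proof is correct: the paper gives no proof of its own for this lemma (it simply cites Theorem 8.1 of Little and Rubin), and your derivation --- factor $p(\bs{Y},\bs{X};\bs{\theta})=p(\bs{X}|\bs{Y},\bs{\theta})\,p(\bs{Y};\bs{\theta})$, take logarithms, and average over the conditional law of $\bs{X}$ given $\bs{Y}$ and $\bs{\theta}^{(r)}$ --- is exactly the standard argument behind that cited result. The measure-theoretic caveats you add are reasonable but not something the paper or the reference dwells on.
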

\begin{lemma}[Gibbs' inequality~\cite{kullback1951information}]\label{lem:Gibbs}
For any $\bs{\theta}\in\bs{\Theta}$, the function $g$ defined in \eqref{eq:g_defn} satisfies
$ g(\bs{\theta};\bs{\theta}^{(r)}) \geq g(\bs{\theta}^{(r)};\bs{\theta}^{(r)})$.
\end{lemma}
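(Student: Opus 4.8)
The plan is to recognize the difference $g\lb\bs{\theta};\bs{\theta}^{(r)}\rb-g\lb\bs{\theta}^{(r)};\bs{\theta}^{(r)}\rb$ as a Kullback--Leibler divergence and then invoke its non-negativity, which follows from Jensen's inequality applied to the convex map $t\mapsto-\log t$. First I would substitute the definition \eqref{eq:g_defn} and combine the two terms---both expectations are taken with respect to the \emph{same} conditional law $p(\bs{X}|\bs{Y},\bs{\theta}^{(r)})$---into a single expectation of a log-ratio:
\begin{equation*}
g\lb\bs{\theta};\bs{\theta}^{(r)}\rb-g\lb\bs{\theta}^{(r)};\bs{\theta}^{(r)}\rb = \bb{E}_{\bs{X}|\bs{Y},\bs{\theta}^{(r)}}\lc -\log \frac{p(\bs{X}|\bs{Y},\bs{\theta})}{p(\bs{X}|\bs{Y},\bs{\theta}^{(r)})}\rc.
\end{equation*}
Here the logarithm of the reference density $p(\bs{X}|\bs{Y},\bs{\theta}^{(r)})$ enters with a plus sign from the second term, so the two logarithms merge cleanly into the ratio.

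Next I would apply Jensen's inequality to pull the expectation inside the logarithm. Since $-\log$ is convex, $\bb{E}\lc-\log t\rc\geq-\log\bb{E}\lc t\rc$, giving
\begin{equation*}
\bb{E}_{\bs{X}|\bs{Y},\bs{\theta}^{(r)}}\lc -\log \frac{p(\bs{X}|\bs{Y},\bs{\theta})}{p(\bs{X}|\bs{Y},\bs{\theta}^{(r)})}\rc \geq -\log \bb{E}_{\bs{X}|\bs{Y},\bs{\theta}^{(r)}}\lc \frac{p(\bs{X}|\bs{Y},\bs{\theta})}{p(\bs{X}|\bs{Y},\bs{\theta}^{(r)})}\rc.
\end{equation*}
I would then evaluate the inner expectation on the right by writing it explicitly as an integral against the density $p(\bs{X}|\bs{Y},\bs{\theta}^{(r)})$; the reference density cancels the denominator, leaving $\int p(\bs{X}|\bs{Y},\bs{\theta})\,d\bs{X}=1$. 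Hence the right-hand side equals $-\log 1=0$, and the desired inequality $g\lb\bs{\theta};\bs{\theta}^{(r)}\rb\geq g\lb\bs{\theta}^{(r)};\bs{\theta}^{(r)}\rb$ follows immediately.

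The step I expect to require the most care is the final normalization, not the algebra: to justify both the cancellation and the identity $\int p(\bs{X}|\bs{Y},\bs{\theta})\,d\bs{X}=1$, I must ensure that $p(\cdot|\bs{Y},\bs{\theta})$ is absolutely continuous with respect to $p(\cdot|\bs{Y},\bs{\theta}^{(r)})$, so that the ratio is well defined almost everywhere under the reference measure and the expectation genuinely integrates to one. For the Gaussian model of \eqref{eq:lds}--\eqref{eq:cost} with $\sigma^2>0$ this holds because all the relevant conditional densities are strictly positive on a common support, so the hypothesis is benign; I would simply note this absolute-continuity condition to keep the Jensen bound valid. The convexity of $-\log$ and the fact that $p(\bs{X}|\bs{Y},\bs{\theta})$ is a normalized density are the only two ingredients, so no part of the argument depends on the EM structure beyond the definition of $g$.
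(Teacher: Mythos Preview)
Your argument is correct and is exactly the classical derivation of Gibbs' inequality: rewrite the difference as a Kullback--Leibler divergence and apply Jensen's inequality to $-\log$. The paper does not supply its own proof of \Cref{lem:Gibbs}; it simply cites the result from~\cite{kullback1951information}, so there is nothing to compare against and your proof fills in the omitted standard justification.
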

\begin{lemma}\label{lem:fnt_convergence}
The sequence of function values $\{ L( \bs{\theta}^{(r)}),\; r\geq 1\}$ defined in \eqref{eq:overall_opt} is monotonically non-decreasing and convergent. 
\end{lemma}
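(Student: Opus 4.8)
The plan is to reproduce the classical EM ascent argument, noting that it is insensitive to whether the parameters are continuous or discrete. First I would apply \Cref{lem:splitL} twice, at $\bs{\theta}=\bs{\theta}\itr{r+1}$ and at $\bs{\theta}=\bs{\theta}\itr{r}$, keeping the \emph{same} reference iterate $\bs{\theta}\itr{r}$ in both, and subtract, to obtain
\begin{multline*}
L(\bs{\theta}\itr{r+1}) - L(\bs{\theta}\itr{r}) = \lb Q(\bs{\theta}\itr{r+1};\bs{\theta}\itr{r}) - Q(\bs{\theta}\itr{r};\bs{\theta}\itr{r})\rb \\
+ \lb g(\bs{\theta}\itr{r+1};\bs{\theta}\itr{r}) - g(\bs{\theta}\itr{r};\bs{\theta}\itr{r})\rb .
\end{multline*}
Denote the two parenthesised quantities by $(\mathrm{i})$ and $(\mathrm{ii})$.

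Second, I would argue that each term is nonnegative. Term $(\mathrm{i})$ is nonnegative because the M-step \eqref{eq:MStep} defines $\bs{\theta}\itr{r+1}$ as a maximizer of $Q(\,\cdot\,;\bs{\theta}\itr{r})$ over $\bs{\Theta}$, so $Q(\bs{\theta}\itr{r+1};\bs{\theta}\itr{r})\geq Q(\bs{\theta};\bs{\theta}\itr{r})$ for every $\bs{\theta}\in\bs{\Theta}$, and in particular for $\bs{\theta}=\bs{\theta}\itr{r}$; this step uses nothing about the discrete/continuous split and is unaffected by $\cl{G}$ being set-valued, since any selected maximizer works. Term $(\mathrm{ii})$ is nonnegative by \Cref{lem:Gibbs} (Gibbs' inequality) evaluated at $\bs{\theta}=\bs{\theta}\itr{r+1}$. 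Hence $L(\bs{\theta}\itr{r+1})\geq L(\bs{\theta}\itr{r})$ for all $r$, which is the asserted monotonicity.

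Third, for convergence I would invoke the fact that a non-decreasing real sequence that is bounded above has a limit. The upper bound is supplied by Condition~1: the inequality $L(\bs{\theta}\itr{0})\leq C$ is posited for an \emph{arbitrary} $\bs{\theta}\itr{0}\in\bs{\Theta}$, i.e., $L\leq C$ on all of $\bs{\Theta}$, so in particular $L(\bs{\theta}\itr{r})\leq C$ for every $r$. Therefore $\{L(\bs{\theta}\itr{r})\}_{r\geq1}$ is non-decreasing and bounded above, hence convergent to some $L^{\star}\leq C$.

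The argument is essentially routine; the only point requiring care is the bookkeeping in the first step — \Cref{lem:splitL} holds for \emph{any} fixed reference iterate, and using the same $\bs{\theta}\itr{r}$ in both evaluations is exactly what makes the $Q$-difference and the $g$-difference carry matching conditioning on $\bs{X}\mid\bs{Y},\bs{\theta}\itr{r}$, so that subtracting the two instances of the split is legitimate. No genuine obstacle arises from the mixed-integer structure at this stage; that structure only becomes delicate later, when one must pass from convergence of the function values $L(\bs{\theta}\itr{r})$ to convergence of the iterates $\bs{\theta}\itr{r}$ themselves, where closedness of $\cl{G}$ (Condition 3) and continuity in $\bs{\gamma}$ (Condition 4) must be fed into Zangwill's global convergence theorem.
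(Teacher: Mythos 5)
Your proof is correct and follows essentially the same route as the paper: decompose $L$ via \Cref{lem:splitL}, bound the $Q$-difference by the M-step optimality in \eqref{eq:MStep} and the $g$-difference by \Cref{lem:Gibbs}, then conclude convergence from monotonicity plus the upper bound in Condition~1. Your reading of Condition~1 as a uniform bound $L\leq C$ on $\bs{\Theta}$ matches how the paper uses it, so no gap remains.
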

\begin{proof}
From \eqref{eq:MStep}, in every iteration $r$,
\begin{equation}\label{eq:Q_increase}
Q(\bs{\theta}^{(r+1)};\bs{\theta}^{(r)}) \geq Q(\bs{\theta}^{(r)};\bs{\theta}^{(r)}).
\end{equation}
From \Cref{lem:splitL,lem:Gibbs},  and \eqref{eq:Q_increase}, $L( \bs{\theta}^{(r+1}) - L(\bs{\theta}^{(r)}) \geq 0$.
Consequently, the sequence $\{ L(\bs{\theta}^{(r)}), r\geq 1\}$ is monotonically non-decreasing, and is bounded from above by Assumption~\ref{assmp:lowerbnd}. Thus, it convergences to a single point. 
\end{proof}
\begin{lemma}\label{lem:incresesL}
If  $\bs{\theta}^{(r)}\notin\cl{S}^*$ for some $r>0$, then we have the relation 
$
L(\bs{\theta}^{(r+1};\bs{\theta}^{(r)})> L(\bs{\theta}^{(r)};\bs{\theta}^{(r)})$.
\end{lemma}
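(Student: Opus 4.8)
The plan is to argue by contradiction using the additive split $L=Q+g$ from \Cref{lem:splitL}, together with the two monotonicity facts already in hand. By \Cref{lem:fnt_convergence} we know $L(\bs{\theta}^{(r+1)})\geq L(\bs{\theta}^{(r)})$, so it suffices to exclude equality. Suppose $\bs{\theta}^{(r)}\notin\cl{S}_*$ but nonetheless $L(\bs{\theta}^{(r+1)})=L(\bs{\theta}^{(r)})$. Writing
\[
L(\bs{\theta}^{(r+1)})-L(\bs{\theta}^{(r)})=\ls Q(\bs{\theta}^{(r+1)};\bs{\theta}^{(r)})-Q(\bs{\theta}^{(r)};\bs{\theta}^{(r)})\rs+\ls g(\bs{\theta}^{(r+1)};\bs{\theta}^{(r)})-g(\bs{\theta}^{(r)};\bs{\theta}^{(r)})\rs ,
\]
the first bracket is nonnegative by the M-step, i.e.\ \eqref{eq:MStep} and \eqref{eq:Q_increase}, and the second is nonnegative by \Cref{lem:Gibbs}. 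Hence both brackets vanish; in particular $Q(\bs{\theta}^{(r)};\bs{\theta}^{(r)})=\max_{\bs{\theta}\in\bs{\Theta}}Q(\bs{\theta};\bs{\theta}^{(r)})$, so $\bs{\theta}^{(r)}$ is itself a maximizer of $Q(\cdot;\bs{\theta}^{(r)})$.

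Next I would extract a first-order condition on the continuous block. Fixing the discrete block at $\bs{z}=\bs{z}^{(r)}$, the previous step shows that $\bs{\gamma}^{(r)}$ maximizes the map $\bs{\gamma}\mapsto Q(\ls\bs{\gamma}^{\T}\;\;(\bs{z}^{(r)})^{\T}\rs^{\T};\bs{\theta}^{(r)})$ over $\bb{G}$. Since $\bb{G}$ has no isolated points and the iterates lie in the relatively open set $\bb{R}^N\times\bb{H}$, this maximizer is interior in $\bs{\gamma}$, so (using differentiability of $Q$ in $\bs{\gamma}$, the differentiable strengthening of \Cref{assmp:contin} that is available in the settings of interest) $\nabla_{\bs{\gamma}}Q(\bs{\theta};\bs{\theta}^{(r)})\big|_{\bs{\theta}=\bs{\theta}^{(r)}}=\bs{0}$. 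In parallel, \Cref{lem:Gibbs} says $g(\cdot;\bs{\theta}^{(r)})$ attains its minimum over $\bs{\Theta}$ at $\bs{\theta}^{(r)}$, hence over the same interior $\bs{\gamma}$-slice, so $\nabla_{\bs{\gamma}}g(\bs{\theta};\bs{\theta}^{(r)})\big|_{\bs{\theta}=\bs{\theta}^{(r)}}=\bs{0}$ as well.

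Adding these two identities and differentiating $L=Q+g$ at $\bs{\theta}^{(r)}$ gives $\nabla_{\bs{\gamma}}L(\bs{\theta}^{(r)})=\nabla_{\bs{\gamma}}Q(\bs{\theta}^{(r)};\bs{\theta}^{(r)})+\nabla_{\bs{\gamma}}g(\bs{\theta}^{(r)};\bs{\theta}^{(r)})=\bs{0}$, i.e.\ $\bs{\theta}^{(r)}\in\cl{S}_*$, contradicting the hypothesis. Therefore $L(\bs{\theta}^{(r+1)})=L(\bs{\theta}^{(r)})$ is impossible, and $L(\bs{\theta}^{(r+1)})>L(\bs{\theta}^{(r)})$.

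I expect the main obstacle to be exactly the first-order argument on the $\bs{\gamma}$-block: one must ensure that the maximizer of $Q(\cdot;\bs{\theta}^{(r)})$ restricted to $\bs{z}=\bs{z}^{(r)}$ is genuinely interior in the continuous variable — this is where the "no isolated points" structure of $\bb{G}$, rather than mere finiteness of $\bb{H}$ or closedness of $\cl{G}$, is really used — and that $Q$, hence $g=L-Q$, is differentiable in $\bs{\gamma}$ there, so that the vanishing first-order behavior of $Q$ and of $g$ can legitimately be added to conclude $\nabla_{\bs{\gamma}}L(\bs{\theta}^{(r)})=\bs{0}$. The rest is routine bookkeeping with $L=Q+g$ and the inequalities already recorded in \Cref{lem:Gibbs} and \eqref{eq:Q_increase}.
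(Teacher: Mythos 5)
Your argument is correct and is essentially the contrapositive packaging of the paper's own proof: both rest on the split $L=Q+g$ from \Cref{lem:splitL}, the observation via \Cref{lem:Gibbs} that $\nabla_{\bs{\gamma}}g(\bs{\theta}^{(r)};\bs{\theta}^{(r)})=\bs{0}$, and the M-step to link stationarity of $L$ in $\bs{\gamma}$ with $\bs{\theta}^{(r)}$ maximizing $Q(\cdot;\bs{\theta}^{(r)})$. The interiority/differentiability caveat you flag is likewise implicit in the paper's proof, so it does not distinguish your route from theirs.
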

\begin{proof}
Using \Cref{lem:splitL}, we get
\begin{equation}
 \nabla_{\bs{\gamma}}Q( \bs{\theta}^{(r)};\bs{\theta}^{(r)}) + \nabla_{\bs{\gamma}}g( \bs{\theta}^{(r)};\bs{\theta}^{(r)})\neq \bs{0}. \label{eq:nonzero_grad}
\end{equation}
Also, from \Cref{lem:Gibbs}, we have
\begin{equation}
 g( [\bs{\gamma}^{\T} \;\;\;{ \bs{\alpha}}^{(r)\T}]^{\T};\bs{\theta}^{(r)}) \geq g( [\bs{\gamma}^{(r)\T}\;\;\; { \bs{\alpha}}^{(r)\T}]^{^{\T}};\bs{\theta}^{(r)}).
\end{equation}
Hence, $ \nabla_{\bs{\gamma}}g( \bs{\theta}^{(r)};\bs{\theta}^{(r)})=\bs{0}$, and as a result, from \eqref{eq:nonzero_grad}, we have
$\nabla_{\bs{\gamma}}Q( \bs{\theta}^{(r)};\bs{\theta}^{(r)})\neq \bs{0}$. So, $\bs{\theta}^{(r)}$ is not a local $Q( \bs{\theta}^{(r)};\bs{\theta}^{(r)})$. Therefore, from the definition of the M-step update in \eqref{eq:MStep}, we conclude that 
\begin{equation}\label{eq:Q_strict}
Q(\bs{\theta}^{(r+1)};\bs{\theta}^{(r)})> Q(\bs{\theta}^{(r)};\bs{\theta}^{(r)}).
\end{equation} 
Now, we arrive at the desired result by \Cref{lem:splitL,lem:Gibbs}.

\end{proof}
\subsection*{Proof of \Cref{thm:general_convergence}}
Let $\bs{\theta}_*=[
\bs{\gamma}_*^{\T}\;\;\;  { \bs{\alpha}}_*^{\T}\in\bb{H}]^{\T}\in\bs{\Theta}$ be a limit point of the sequence $\{ \bs{\theta}^{(r)}, r\geq 1\}$. From \Cref{lem:fnt_convergence} and Assumption~\ref{assmp:compact}, we know that there exists a subsequence $\{ \bs{\theta}^{(r_j)},j\geq 1\}$ of $\{ \bs{\theta}^{(r)},r\geq 1\}$ such that 
\begin{equation}\label{eq:converge_sub}
\lim_{j\to\infty}\bs{\theta}^{(r_j)} = \bs{\theta}_*=[
\bs{\gamma}_*^{\T}\;\;\;  { \bs{\alpha}}_*^{\T}]^{\T}.
\end{equation}
We next construct another subsequence $\{ \bs{\theta}^{(r_j+1)},\;j\geq 1\}$, which also belongs a compact set due to Assumption~\ref{assmp:compact}. Hence, the new sequence contains a convergent subsequence $\{ \bs{\theta}^{(r_{j_l}+1)},\;l\geq 1\}$ for which there exists $\hat{\bs{\theta}}$ such that
\begin{equation}\label{eq:subseq_limit_2}
\lim_{l\to\infty}  \bs{\theta}^{(r_{j_l}+1)} = \hat{\bs{\theta}}.
\end{equation}

From \eqref{eq:MStep}, we get 
\begin{equation}
\bs{\theta}^{(r_{j_l}+1)} \in \cl{G}(
\bs{\gamma}^{(r_{j_l})}, { \bs{\alpha}}^{(r_{j_l})}).
\end{equation}
Here, by construction, $\{{ \bs{\alpha}}^{(r_{j_l})},l\geq 1\}$ is a subsequence of the convergent sequence $\{ \bs{\alpha}^{(r)},r\geq 1\}$. So it  converges to ${ \bs{\alpha}}_*$, and since the subsequence belongs to a finite set, there exists $L>0$ such that 
\begin{equation}
\bs{\theta}^{(r_{j_l}+1)} \in \cl{G}(
\bs{\gamma}^{(r_{j_l})}, { \bs{\alpha}}_*),\;\; \forall\;l>L.
\end{equation}
Therefore, by Assumption~\ref{assmp:closed}  and \eqref{eq:subseq_limit_2}, we arrive at
\begin{equation}\label{eq:contra_inter}
\hat{\bs{\theta}}  \in \cl{G}\lb 
\lim_{l\to\infty} \bs{\gamma}^{(r_{j_l})}, { \bs{\alpha}}_*
 \rb  = \cl{G}(\bs{\gamma}_*, { \bs{\alpha}}_*),
\end{equation}
due to \eqref{eq:converge_sub}, where $\cl{G}(\bs{\gamma}_*, { \bs{\alpha}}_*)$ is the next iterate of the EM algorithm if the current iterate is $\bs{\theta}_*$. The relation \eqref{eq:contra_inter} implies 
\begin{equation}\label{eq:contra_final}
L(\bs{\theta}_*)  = L(\hat{\bs{\theta}})= L\lb \cl{G}(\bs{\gamma}_*, { \bs{\alpha}}_*)\rb,
\end{equation}
due to the convergence of the sequence $\{ L(\bs{\theta}^{(r)}),r\geq1\}$  as ensured by \Cref{lem:fnt_convergence}.

Furthermore, by \Cref{lem:incresesL}, if $\bs{\theta}_*\notin\cl{S}^*$, then,  $L(\bs{\theta}_*) > L\lb \cl{G}( \bs{\gamma}_*,\bs{\alpha}_*)\rb$. Hence, \eqref{eq:contra_final} holds only if $\bs{\theta}_*\in\cl{S}^*$. Finally, by \Cref{lem:fnt_convergence}, $L(\bs{\theta})$ is a constant over the subset of  $\cl{S}^*$ to which the iterates converge, and the proof is completed.

\section{Proof of \Cref{thm:convergence_sbl}}\label{app:convergence_sbl}

The proof verifies the four assumptions of \Cref{thm:convergence_sbl} hold for the algorithm. We need the following supporting lemmas.

\begin{lemma}[{\cite[Theorem 2.11]{arnuautu2003optimal}}]\label{lem:bound}
Let $\cl{S}$ be an unbounded subset of $\bb{R}^n$, and $f:\cl{S}\to\bb{R}$ is a continuous function, then $f$ is said to be coercive (i.e., $\underset{\lV\bs{s}\rV\to\infty}{\lim}f(\bs{s}) = \infty$) if and only if all of its level sets are compact. 
\end{lemma}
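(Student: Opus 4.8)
The statement is an if-and-only-if characterization of coercivity via compactness of the sublevel sets $\cl{L}_c = \lc \bs{s}\in\cl{S}: f(\bs{s})\leq c\rc$, $c\in\bb{R}$ (these are the relevant ``level sets'' once coercivity is read as $f\to\infty$). The plan is to prove the two implications separately, in each case reducing compactness to the Heine--Borel criterion that a subset of $\bb{R}^n$ is compact iff it is closed and bounded. I would first flag that $\cl{S}$ must be closed in $\bb{R}^n$, an assumption implicit in the cited optimization setting: otherwise the forward direction fails, as $f(s)=s$ on $\cl{S}=(0,\infty)$ is coercive while $\cl{L}_c=(0,c]$ is bounded but not closed, hence not compact.

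For the forward implication I would assume $f$ coercive and fix $c$. Applying the definition of $\lim_{\lV\bs{s}\rV\to\infty} f(\bs{s})=\infty$ with threshold $c$ produces an $M>0$ with $f(\bs{s})>c$ whenever $\lV\bs{s}\rV>M$; contrapositively, every $\bs{s}\in\cl{L}_c$ has $\lV\bs{s}\rV\leq M$, so $\cl{L}_c$ is bounded. Since $\cl{L}_c$ is the preimage $f^{-1}((-\infty,c])$ of a closed set under the continuous map $f$, it is closed in $\cl{S}$, and therefore closed in $\bb{R}^n$ because $\cl{S}$ is closed. Heine--Borel then gives compactness.

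For the converse I would argue by contradiction: assume every sublevel set is compact but $f$ is not coercive. Negating the limit yields a single constant $c$ such that for every $M>0$ some $\bs{s}\in\cl{S}$ has $\lV\bs{s}\rV>M$ and $f(\bs{s})\leq c$. Taking $M=k$ for $k=1,2,\ldots$ builds a sequence $\lc\bs{s}_k\rc\subseteq\cl{S}$ with $\lV\bs{s}_k\rV>k$ and $f(\bs{s}_k)\leq c$, so $\lc\bs{s}_k\rc\subseteq\cl{L}_c$. Then $\cl{L}_c$ contains points of unbounded norm, contradicting that the compact set $\cl{L}_c$ is bounded; hence $f$ must be coercive.

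The bounded/unbounded bookkeeping is routine; the two points demanding care are writing the precise logical negation of $\lim_{\lV\bs{s}\rV\to\infty} f(\bs{s})=\infty$ for the converse, and invoking the closedness of $\cl{S}$ in the forward direction to pass from ``closed in $\cl{S}$'' to ``closed in $\bb{R}^n$'' so that Heine--Borel applies cleanly. I would therefore make the closedness hypothesis explicit, or note that it is inherited from the ambient problem, where $\cl{S}$ is $\bb{R}^n$ or a closed constraint set.
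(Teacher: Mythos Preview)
Your argument is correct and is the standard proof of this characterization. Note, however, that the paper does not supply its own proof of this lemma: it is stated with a citation to \cite[Theorem~2.11]{arnuautu2003optimal} and used as a black box, so there is no in-paper argument to compare against.

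Your explicit flagging of the implicit closedness hypothesis on $\cl{S}$ is a genuine improvement over the lemma as stated in the paper, and your counterexample $f(s)=s$ on $(0,\infty)$ is apt. In the paper's application the domain of $\bs{\gamma}$ is $\bb{R}_+^{N}$ (or $\bb{R}^{N}$), which is closed, so the gap is harmless there; but as a standalone statement the lemma is false without that assumption, and you are right to make it explicit.
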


\begin{lemma}[{ \cite[Theorem 5.19, 5.20]{schott2016matrix}}]\label{lem:continous}
The determinant and inverse of a matrix are continuous in its elements. 
\end{lemma}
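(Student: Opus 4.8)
The plan is to treat the two claims separately, reducing both to the elementary facts that a polynomial in the matrix entries is continuous and that the quotient of two continuous functions is continuous wherever the denominator does not vanish.

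First, for the determinant, I would invoke the Leibniz formula: for an $n\times n$ matrix $\bs{M}$ with entries $M_{ij}$,
\[
\det(\bs{M}) = \sum_{\sigma\in S_n}\operatorname{sgn}(\sigma)\prod_{i=1}^{n}M_{i\sigma(i)},
\]
where $S_n$ is the symmetric group on $n$ symbols. This writes $\det(\bs{M})$ as a finite sum of products of the entries, i.e., a polynomial in the $n^2$ variables $M_{ij}$. Since every polynomial is continuous, and finite sums and products of continuous functions are continuous, the map $\bs{M}\mapsto\det(\bs{M})$ is continuous from $\bb{R}^{n\times n}$ to $\bb{R}$.

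Second, for the inverse I would restrict attention to the open set of invertible matrices $\{\bs{M}:\det(\bs{M})\neq 0\}$, which is exactly where the inverse is defined, and use the adjugate (Cramer) formula
\[
\bs{M}^{-1} = \frac{1}{\det(\bs{M})}\,\operatorname{adj}(\bs{M}),
\]
where the $(i,j)$ entry of $\operatorname{adj}(\bs{M})$ is the cofactor $(-1)^{i+j}\det(\bs{M}_{ji})$ and $\bs{M}_{ji}$ is the $(n-1)\times(n-1)$ submatrix obtained by deleting row $j$ and column $i$ of $\bs{M}$. Each cofactor is the determinant of a submatrix whose entries are among the $M_{ij}$, so by the first part it is a continuous (in fact polynomial) function of the entries. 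Hence every entry of $\bs{M}^{-1}$ is a ratio of two continuous functions whose denominator $\det(\bs{M})$ is nonzero on the domain, and is therefore continuous there.

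The only step requiring care — and the nearest thing to an obstacle — is the passage from continuity of the numerator and denominator to continuity of their quotient. This is legitimate precisely because we have restricted to the open set on which $\det(\bs{M})\neq 0$, where division by $\det(\bs{M})$ is a continuous operation; on this set each entry of $\bs{M}^{-1}$, and hence the matrix-valued map $\bs{M}\mapsto\bs{M}^{-1}$ under any matrix norm, is continuous, which completes the argument.
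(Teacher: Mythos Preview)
Your argument is correct and is the standard textbook proof: the Leibniz expansion makes $\det$ a polynomial in the entries, and Cramer's rule then exhibits each entry of $\bs{M}^{-1}$ as a rational function with nonvanishing denominator on the set of invertible matrices. The paper does not supply its own proof of this lemma; it simply cites \cite[Theorems~5.19, 5.20]{schott2016matrix}, so there is nothing to compare against beyond noting that your proof is exactly the kind of elementary verification that reference records.
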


\begin{lemma}\label{lem:compact}
The function $L( [\bs{\gamma} ^{\T}\;\;\; { \bs{\alpha}^{\T}}]^{\T})$  is continuous and coercive with respect to $\bs{\gamma}$.
\end{lemma}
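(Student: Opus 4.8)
\emph{Setup and continuity.} The plan is to work directly from the closed form \eqref{eq:cost}, in which $\bs\gamma$ enters only through the block $\bs I\otimes\diag\{\bs\gamma\}$ of $\bs R_Y(\bs\theta)$ in \eqref{eq:Ry}. Fix the discrete part $\bs z\in\bb H$ and abbreviate $\bs M=(\diag\{\bs z\}\otimes\bs A)\tilde{\bs D}$, so that $\bs R_Y(\bs\gamma,\bs z)=\bs M\,(\bs I\otimes\diag\{\bs\gamma\})\,\bs M^{\T}$; since the $\bs\gamma$-update of the algorithm returns nonnegative vectors (see \eqref{eq:mapping_sbl}), we take $\bs\gamma\in\bb R^{N}_{+}$, which makes $\bs R_Y(\bs\gamma,\bs z)\succeq\bs 0$ as a congruence of a nonnegative diagonal matrix. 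Every entry of $\bs R_Y(\bs\gamma,\bs z)$ is then an affine, hence continuous, function of the entries of $\bs\gamma$. Because $\sigma^2>0$ and $\bs R_Y(\bs\gamma,\bs z)\succeq\bs 0$, the matrix $\bs R_Y(\bs\gamma,\bs z)+\sigma^2\bs I$ is positive definite for every $\bs\gamma$, hence invertible with strictly positive determinant; by \Cref{lem:continous}, $\det(\bs R_Y(\bs\gamma,\bs z)+\sigma^2\bs I)$ and $(\bs R_Y(\bs\gamma,\bs z)+\sigma^2\bs I)^{-1}$ are continuous in $\bs\gamma$. Composing with the logarithm (continuous on $(0,\infty)$) and with the quadratic form $\bs y\mapsto\bs y^{\T}(\,\cdot\,)\bs y$ shows each term of \eqref{eq:cost}, and therefore $L(\bs\gamma,\bs z)$, is continuous in $\bs\gamma$.

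\emph{Coercivity.} Here the relevant notion is that $-L(\,\cdot\,,\bs z)$ is coercive, i.e.\ $L(\bs\gamma,\bs z)\to-\infty$ as $\lVert\bs\gamma\rVert\to\infty$ over $\bs\gamma\in\bb R^{N}_{+}$, which is precisely what \Cref{lem:bound} turns into compactness of the super-level sets $\{\bs\gamma:L(\bs\gamma,\bs z)\ge c\}$ (and, taking the finite union over $\bs z\in\bb H$, compactness of the full level set in Assumption~\ref{assmp:compact}). Two bounds drive this. First, $\bs 0\preceq(\bs R_Y+\sigma^2\bs I)^{-1}\preceq\sigma^{-2}\bs I$ gives $-\tfrac{1}{2\sigma^2}\lVert\bs y\rVert^2\le-\tfrac12\bs y^{\T}(\bs R_Y+\sigma^2\bs I)^{-1}\bs y\le 0$, so the quadratic term of \eqref{eq:cost} is bounded. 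Second, since $\bs R_Y\succeq\bs 0$, all eigenvalues of $\bs R_Y+\sigma^2\bs I$ are at least $\sigma^2$ and at least one is at least $(Km)^{-1}\mathrm{tr}\,\bs R_Y$, so $\log\det(\bs R_Y+\sigma^2\bs I)\ge(Km-1)\log\sigma^2+\log\big((Km)^{-1}\mathrm{tr}\,\bs R_Y\big)$. Hence it suffices to show $\mathrm{tr}\,\bs R_Y(\bs\gamma,\bs z)\to\infty$ as $\lVert\bs\gamma\rVert\to\infty$. Expanding the trace gives $\mathrm{tr}\,\bs R_Y(\bs\gamma,\bs z)=\sum_{j=1}^{N}c_j\,\gamma_j$ with $c_j=\sum_{k=1}^{K}\lVert\bs m_{(k-1)N+j}\rVert^2\ge 0$, where $\bs m_i$ is the $i$-th column of $\bs M$; tracing that column through the lower block-triangular form \eqref{eq:Rx} of $\tilde{\bs D}$ yields $c_j=\sum_{l=1}^{K}z_l\sum_{p=0}^{l-1}\lVert\bs A\bs D^{p}\bs e_j\rVert^2$, which is strictly positive provided at least one output is observed and the pair $(\bs A,\bs D)$ is observable (properties of the model in \eqref{eq:lds}). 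Given $c_j>0$ for all $j$ and $\bs\gamma\ge\bs 0$, we obtain $\mathrm{tr}\,\bs R_Y\ge(\min_j c_j)\lVert\bs\gamma\rVert_1\to\infty$, so $-\tfrac12\log\det(\bs R_Y+\sigma^2\bs I)\to-\infty$ while the quadratic term stays bounded, giving $L(\bs\gamma,\bs z)\to-\infty$; \Cref{lem:bound} then finishes.

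\emph{Main obstacle.} The continuity argument and the eigenvalue/trace bookkeeping are routine; the real content is the last step of the coercivity proof, namely verifying $c_j>0$ for \emph{every} coordinate $j$, i.e.\ that no prior-variance component $\gamma_j$ can be sent to infinity without changing $\bs R_Y$. This is exactly where the nonsingularity of $\tilde{\bs D}$ (lower block-triangular with identity diagonal blocks) and the observability of the linear system \eqref{eq:lds}, together with the presence of at least one non-missing observation, are indispensable: without such a non-degeneracy condition the cost is flat along a ray in $\bs\gamma$ and the claim fails, so the proof must either invoke this condition from the modeling assumptions of \cite{joseph2022state} or state it explicitly.
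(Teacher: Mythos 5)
Your continuity argument is essentially the paper's own (entries of $\bs{R}_Y$ affine in $\bs{\gamma}$, positive definiteness from $\sigma^2>0$, then \Cref{lem:continous} plus composition with $\log$ and the quadratic form). For coercivity, however, you take a genuinely more careful route than the paper. The paper simply asserts, ``from \eqref{eq:Ry}'', that $\bs{y}^\T(\bs{R}_Y+\sigma^2\bs{I})^{-1}\bs{y}\to 0$ and $\log\lvert\bs{R}_Y+\sigma^2\bs{I}\rvert\to\infty$ as $\Vert\bs{\gamma}\Vert\to\infty$, and even concludes $\lim L=+\infty$, whereas what Assumption~\ref{assmp:compact} needs (and what is true under non-degeneracy) is $L\to-\infty$, i.e.\ coercivity of $-L$ with compact super-level sets; you silently and correctly fix that sign. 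You instead bound the quadratic term between $-\Vert\bs{y}\Vert^2/(2\sigma^2)$ and $0$ and push $\log\det$ to $+\infty$ through $\mathrm{tr}\,\bs{R}_Y=\sum_j c_j\gamma_j$, which forces you to confront exactly the point the paper glosses over: the argument works only if $c_j>0$ for every $j$. This is a real gap in the paper's proof as written — for instance, for $\bs{z}=\bs{0}\in\bb{H}$ (all observations missing) one has $\bs{R}_Y\equiv\bs{0}$, $L$ is constant in $\bs{\gamma}$, and both limit claims and the compactness of the level set fail; the same happens along the coordinate ray of any $\gamma_j$ with $c_j=0$. So your insistence that a non-degeneracy condition be imported from \cite{joseph2022state} or stated explicitly is well placed, and with such a condition your proof is complete and strictly tighter than the paper's.

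One caveat on your own sufficient condition: ``at least one observed output plus observability of $(\bs{A},\bs{D})$'' is not quite enough. Since $c_j=\sum_{l}z_l\sum_{p=0}^{l-1}\Vert\bs{A}\bs{D}^p\bs{e}_j\Vert^2$ accumulates powers only up to $l-1$, if the sole observed time is $l=1$ then $c_j=\Vert\bs{A}\bs{e}_j\Vert^2$, which can vanish for some $j$ even when the pair is observable. You need, e.g., some observed time $l\ge n$, or that $\bs{A}$ has no zero columns (more generally, that no $\bs{e}_j$ lies in the common kernel of $\bs{A}\bs{D}^p$ over the relevant powers). State the condition in that form and the coercivity step closes.
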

\begin{proof}
By \Cref{lem:continous}, the determinant and inverse of a matrix are continuous in its elements. Therefore, from \eqref{eq:Ry} and \eqref{eq:cost}, $L( [\bs{\gamma} ^{\T}\;\;\; { \bs{\alpha}^{\T}}]^{\T})$ is a continuous function of $\bs{\gamma}$. Further, $\lV\bs{\gamma}\rV$
goes to $\infty$ if and only if at least one entry of $\bs{\gamma}$ goes to $\infty$. As a result, from \eqref{eq:Ry}, we get 
\begin{equation}
\lim_{\lV\bs{\gamma}\rV\to\infty}\bs{y} ^\T\lb \bs{R}_{Y}\bs{\theta}+\sigma^2\bs{I}\rb^{-1}\bs{y} = \bs{0}.
\end{equation}
This relation leads to the following,
\begin{equation}
\lim_{\lV\bs{\gamma}\rV\to\infty} L( [\bs{\gamma} ^{\T}\;\;\; { \bs{\alpha}^{\T}}]^{\T})  = \lim_{\lV\bs{\gamma}\rV\to\infty} \log \lv\bs{R}_{Y}\bs{\theta}+\sigma^2\bs{I}\rv =\infty.
\end{equation} 
Hence, we conclude that $L( [\bs{\gamma} ^{\T}\;\;\; { \bs{\alpha}^{\T}}]^{\T})$  is a coercive function of $\bs{\gamma}$, and the proof is complete.
\end{proof}

\begin{lemma}[{\cite[Theorem 4.2.1]{strichartz2000way}}]\label{lem:max_contin}
Let $(\cl{X},\tau)$ be a topological space and the functions $q_1,q_2,q_3,\ldots,q_p :\cl{X}\to\bb{R}$ be continuous, for some finite $p>0$. Then, the function $q_{\max}$ defined as
\begin{equation}
q_{\max}=\max\{q_1,q_2,\ldots,q_p\}    
\end{equation}
 is continuous.
\end{lemma}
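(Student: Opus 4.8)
The plan is to prove continuity directly from its topological definition, using the standard characterization that a map $q:\cl{X}\to\bb{R}$ is continuous if and only if, for a fixed sub-basis of the usual topology on $\bb{R}$, the preimage under $q$ of every sub-basic set lies in $\tau$. I would take the sub-basis consisting of the open rays $(a,\infty)$ and $(-\infty,b)$ for $a,b\in\bb{R}$. It then suffices to show that the two sets $\{x\in\cl{X}:q_{\max}(x)>a\}$ and $\{x\in\cl{X}:q_{\max}(x)<b\}$ are open for every $a$ and $b$.

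For the upper rays, observe that $q_{\max}(x)>a$ holds exactly when $q_i(x)>a$ for at least one index $i$, which gives the identity $\{x\in\cl{X}:q_{\max}(x)>a\}=\bigcup_{i=1}^{p}\{x\in\cl{X}:q_i(x)>a\}$. Each set in this union is open because $q_i$ is continuous, and an arbitrary union of open sets is open, so the whole preimage lies in $\tau$. For the lower rays, $q_{\max}(x)<b$ holds exactly when $q_i(x)<b$ for \emph{every} index $i$, which gives $\{x\in\cl{X}:q_{\max}(x)<b\}=\bigcap_{i=1}^{p}\{x\in\cl{X}:q_i(x)<b\}$. Each set in this intersection is again open by continuity of $q_i$, and because the index set $\{1,\ldots,p\}$ is finite, this is a finite intersection of open sets and hence open. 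Since both sub-basic preimages are open, $q_{\max}$ is continuous.

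The one place where the hypothesis must be used with care — and what I expect to be the conceptual crux rather than a calculation — is the finiteness of $p$. The union argument for the upper rays needs no restriction on the number of functions, but the intersection argument for the lower rays relies essentially on $p<\infty$, since an arbitrary intersection of open sets need not be open. This is not a cosmetic point: dropping finiteness breaks the conclusion, as a pointwise supremum of infinitely many continuous functions can fail to be continuous. A shorter alternative, valid because the codomain is $\bb{R}$, is induction on $p$ with the two-function base case $\max(f,g)=\tfrac12\lb f+g+\lv f-g\rv\rb$, reducing everything to continuity of sums and of the absolute value; this route hides the finiteness inside the induction, whereas the sub-basis argument exposes the role of $p<\infty$ most transparently, which is why I would present it as the main proof.
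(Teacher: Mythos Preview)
Your proof is correct and is one of the standard arguments for this fact. Note, however, that the paper does not supply its own proof of this lemma: it is stated with a citation to \cite[Theorem 4.2.1]{strichartz2000way} and used as a black box, so there is no in-paper argument to compare against. Your sub-basis approach (and the alternative $\max(f,g)=\tfrac12(f+g+|f-g|)$ you mention) are both acceptable justifications should one be desired.
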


\subsection*{Proof of \Cref{thm:convergence_sbl}}
We start with verifying the first assumption of \Cref{thm:general_convergence}. To this end, we notice that covariance matrix $\bs{R}_{Y}(\bs{\theta})$ is positive semidefinite, and thus, we have
\begin{align}
\log \lv\bs{R}_{Y}(\bs{\theta})+\sigma^2\bs{I}\rv&\geq \log \lv\sigma^2\bs{I}\rv\\
\bs{y} ^\T\lb \bs{R}_{Y}(\bs{\theta})+\sigma^2\bs{I}\rb^{-1}\bs{y}&> 0,
\end{align}
for any $\bs{y}\in\bb{R}^{Km}$. Therefore, \eqref{eq:cost} leads to the following:
\begin{equation}
L(\bs{\theta}) \leq -\frac{Km}{2} \log\lb 2\pi\sigma^2\rb,
\end{equation}
and Assumption~\ref{assmp:lowerbnd} holds. 

To verify the second assumption of \Cref{thm:convergence_sbl}, we note that $\bb{H}$ is a compact set, and it is sufficient to show that $\bs{\gamma}^{(r)}$ belongs to a compact set. For this, we define
\begin{multline}
\cl{S} \triangleq \Big\{ \bs{\gamma}\in\bb{R}^n:\; \exists \;{ \bs{\alpha}}\in\bb{H}\; \text{such that}\\
L( [\bs{\gamma} ^{\T}\;\;\; { \bs{\alpha}^{\T}}]^{\T})\geq L ( \bs{\theta}^{(0)})\Big\}.
\end{multline}
Also, by \Cref{lem:fnt_convergence}, $L(\bs{\theta}^{(r)})\leq L (\bs{\theta}^{(0)})$, for any integer $r>0$.
Therefore, $\bs{\gamma}^{(r)}\in \cl{S}$. Consequently,  it is enough to show that $\cl{S}$ is a compact set. To this end, we rewrite $\cl{S}$ as a finite union of level sets of $L$,
\begin{equation}
\cl{S} = \bigcup_{{ \bs{\alpha}}\in\bb{H}}\Big\{ \bs{\gamma}\in\bb{R}^n:\; L( [\bs{\gamma} ^{\T}\;\;\; { \bs{\alpha}^{\T}}]^{\T})\geq L ( \bs{\theta}^{(0)})\Big\}.
\end{equation}
Since a finite union of compact sets is compact, we need to show that the level sets of $L( [\bs{\gamma} ^{\T}\;\;\; { \bs{\alpha}^{\T}}]^{\T})$ for a fixed value of ${ \bs{\alpha}}$ are compact.  Invoking \Cref{lem:bound}, these level sets are compact if $L( [\bs{\gamma} ^{\T}\;\;\; { \bs{\alpha}^{\T}}]^{\T})$ is continuous and coercive with respect to $\bs{\gamma}$. Then, by \Cref{lem:compact}, Assumption~\ref{assmp:compact} holds.

To check the third assumption of \Cref{thm:convergence_sbl}, we verify if the following holds:
\begin{equation}
\lim_{r\to\infty} \cl{G}(
\bs{\gamma}^{(r)} , { \bs{\alpha}}_*) = \cl{G}\lb
\lim_{r\to\infty} \bs{\gamma}^{(r)} , { \bs{\alpha}}_*
\rb,
\end{equation}
when the limits exist. For this, we recall from \eqref{eq:mapping_sbl} that  $\cl{G}_{\gamma}( \bs{\gamma}^{(r)},{ \bs{\alpha}}_*)$ is a singleton set and continuous in $\bs{\gamma}^{(r)}$. So, 
\begin{equation}
\lim_{r\to\infty }\cl{G}_{\gamma}( \bs{\gamma}^{(r)},{ \bs{\alpha}}_*) = \cl{G}_{\gamma}\lb \lim_{r\to\infty } \bs{\gamma}^{(r)},{ \bs{\alpha}}_*\rb.
\end{equation}
We next complete the proof by establishing that the limit $\lim_{r\to\infty }\cl{G}_{{\alpha}}\lb \bs{\gamma}^{(r)}, { \bs{\alpha}}_*\rb=\cl{G}_{{\alpha}}\lb \lim_{r\to\infty }\bs{\gamma}^{(r)},{ \bs{\alpha}}_*\rb$. For this, we consider the sequence $\{ Q_{\max}\lb\bs{\gamma}^{(r)}\rb  \}_{r=1}^{\infty}$ where
\begin{equation}
Q_{\max}(\bs{\gamma}^{(r)}) = \underset{{ \bs{\alpha}}\in\{0,1\}^K}{\max}\; q(\bs{\gamma}^{(r)},\bs{\alpha}),
\end{equation}
where $q(\bs{\gamma}^{(r)},\bs{\alpha})=Q([ \bs{\gamma}^{(r)\T}\;\;\;{ \bs{\alpha}}^{\T}]^{\T};[\bs{\gamma}^{(r)\T}\;\;\;{ \bs{\alpha}}_*^{\T}]^{\T})$
from \eqref{eq:MStep}. We notice that $\{0,1\}^K$ is a finite set. Also, since $p(\bs{Y},\bs{X}|\bs{\theta})$ and $p(\bs{Y},\bs{X}|\bs{\theta}^{(r)})$ are Gaussian, $q(\bs{\gamma}^{(r)},\bs{\alpha})$ is a continuous function of $\bs{\gamma}^{(r)}$. Then, invoking \Cref{lem:max_contin}, we obtain that $Q_{\max}(\bs{\gamma}^{(r)})$ is a continuous function of $\bs{\gamma}^{(r)}$. Therefore, with $\bbm{1}$ being the indicator function, we derive that for any ${ \bs{\alpha}}$,
\begin{align}
\lim_{r\to\infty} \bbm{1}\lc{ \bs{\alpha}}\in\cl{G}_{{\alpha}}( \bs{\gamma}^{(r)},{ \bs{\alpha}}_*)\rc \notag\\
&\hspace{-3.5cm}= \lim_{r\to\infty} \bbm{1}\lc  q(\bs{\gamma}^{(r)},\bs{\alpha}) =  Q_{\max}(\bs{\gamma}^{(r)})\rc\\
&\hspace{-3.5cm}= \bbm{1}\lc  q\lb\lim_{r\to\infty}\bs{\gamma}^{(r)},\bs{\alpha}\rb =  Q_{\max}\lb\lim_{r\to\infty}\bs{\gamma}^{(r)}\rb\rc\label{eq:interlemma}\\
&\hspace{-3.5cm}= \bbm{1}\lc  { \bs{\alpha}}\in\cl{G}_{{\alpha}}\lb \lim_{r\to\infty}\bs{\gamma}^{(r)}, { \bs{\alpha}}_*\rb \rc,
\end{align}
where \eqref{eq:interlemma} uses the continuity of $q$ and $Q_{\max}$.
As a result, 
\begin{align}
\lim_{r\to\infty} \cl{G}_{{\alpha}}(
\bs{\gamma}^{(r)} , { \bs{\alpha}}_*
 )\notag\\
 &\hspace{-2.6cm}=  \lc{ \bs{\alpha}}\in\{0,1\}^K:  \bbm{1}\lc {\bs{\alpha}}\in\cl{G}_{{\alpha}}\lb \lim_{r\to\infty}\bs{\gamma}^{(r)},{ \bs{\alpha}}_*\rb\rc = 1 \rc\\
 &\hspace{-2.6cm} =  \cl{G}_{{\alpha}}\lb
\lim_{r\to\infty}\bs{\gamma}^{(r)} , { \bs{\alpha}}_*
 \rb.
\end{align}
Hence,  Assumption~\ref{assmp:closed} holds.

Finally, we verify the fourth assumption of \Cref{thm:general_convergence}. The dependence of $\bs{\gamma}^{(r)}$ and $\bs{\gamma}$ on $Q$ is via distributions  $p(\bs{Y},\bs{X}|\bs{\theta})$ and $p(\bs{Y},\bs{X}|\bs{\theta}^{(r)})$, which are Gaussian. The function is computed using Kalman smoothing that involves only continuous functions of $\diag\{\bs{\gamma}\}$ and $\diag\{\bs{\gamma}^{(r)}\}$. Thus,  Assumption~\ref{assmp:contin} holds, and the proof is complete.

\bibliographystyle{IEEEtran}
\bibliography{Missing_cite}

\begin{thebibliography}{10}
\providecommand{\url}[1]{#1}
\csname url@samestyle\endcsname
\providecommand{\newblock}{\relax}
\providecommand{\bibinfo}[2]{#2}
\providecommand{\BIBentrySTDinterwordspacing}{\spaceskip=0pt\relax}
\providecommand{\BIBentryALTinterwordstretchfactor}{4}
\providecommand{\BIBentryALTinterwordspacing}{\spaceskip=\fontdimen2\font plus
\BIBentryALTinterwordstretchfactor\fontdimen3\font minus
  \fontdimen4\font\relax}
\providecommand{\BIBforeignlanguage}[2]{{%
\expandafter\ifx\csname l@#1\endcsname\relax
\typeout{** WARNING: IEEEtran.bst: No hyphenation pattern has been}%
\typeout{** loaded for the language `#1'. Using the pattern for}%
\typeout{** the default language instead.}%
\else
\language=\csname l@#1\endcsname
\fi
#2}}
\providecommand{\BIBdecl}{\relax}
\BIBdecl

\bibitem{ghahramani1993function}
Z.~Ghahramani, ``Function approximation via density estimation using an {EM}
  approach,'' \emph{MIT Cognit. Comput. Science Tech. Rep.}, vol. 9304, 1993.

\bibitem{ghahramani1998learning}
Z.~Ghahramani and S.~Roweis, ``Learning nonlinear dynamical systems using an
  {EM} algorithm,'' \emph{Adv. Neural Inf. Process. Syst.}, vol.~11, 1998.

\bibitem{baum1970maximization}
L.~E. Baum, T.~Petrie, G.~Soules, and N.~Weiss, ``A maximization technique
  occurring in the statistical analysis of probabilistic functions of {M}arkov
  chains,'' \emph{Ann. Math. Stat.}, vol.~41, no.~1, pp. 164--171, Feb. 1970.

\bibitem{lari1990estimation}
K.~Lari and S.~J. Young, ``The estimation of stochastic context-free grammars
  using the inside-outside algorithm,'' \emph{Comput. Speech Lang.}, vol.~4,
  no.~1, pp. 35--56, Jan. 1990.

\bibitem{wipf2004sparse}
D.~P. Wipf and B.~D. Rao, ``Sparse {B}ayesian learning for basis selection,''
  \emph{IEEE Trans. Signal Process.}, vol.~52, no.~8, pp. 2153--2164, Jul.
  2004.

\bibitem{wu1983convergence}
C.~J. Wu, ``On the convergence properties of the {EM} algorithm,'' \emph{Ann.
  Stat.}, pp. 95--103, Mar. 1983.

\bibitem{xu1996convergence}
L.~Xu and M.~I. Jordan, ``On convergence properties of the {EM} algorithm for
  {Gaussian} mixtures,'' \emph{Neural Comput.}, vol.~8, no.~1, pp. 129--151,
  Jan. 1996.

\bibitem{varadhan2008simple}
R.~Varadhan and C.~Roland, ``Simple and globally convergent methods for
  accelerating the convergence of any {EM} algorithm,'' \emph{Scand. J. Stat.},
  vol.~35, no.~2, pp. 335--353, Jan. 2008.

\bibitem{joseph2022state}
G.~Joseph and P.~K. Varshney, ``State estimation of linear systems with sparse
  inputs and markov-modulated missing outputs,'' in \emph{Proc. Eur. Signal
  Process. Conf.}, Aug./Sep. 2022, pp. 837--841.

\bibitem{song2011feedback}
H.~Song, T.~Liu, X.~Luo, and G.~Wang, ``Feedback based sparse recovery for
  motion tracking in {RF} sensor networks,'' in \emph{Proc.\ IEEE Int. Conf.
  Netw. Archit. Storage}, Jul. 2011, pp. 203--207.

\bibitem{roughan2011spatio}
M.~Roughan, Y.~Zhang, W.~Willinger, and L.~Qiu, ``Spatio-temporal compressive
  sensing and internet traffic matrices (extended version),'' \emph{IEEE/ACM
  Trans.\ Networking}, vol.~20, no.~3, pp. 662--676, Oct. 2011.

\bibitem{rallapalli2010exploiting}
S.~Rallapalli, L.~Qiu, Y.~Zhang, and Y.-C. Chen, ``Exploiting temporal
  stability and low-rank structure for localization in mobile networks,'' in
  \emph{Proc.\ Int.\ Conf.\ Mobile Comput.\ Netw.}, Sep. 2010, pp. 161--172.

\bibitem{li2011compressive}
Z.~Li, Y.~Zhu, H.~Zhu, and M.~Li, ``Compressive sensing approach to urban
  traffic sensing,'' in \emph{Proc.\ Int.\ Conf.\ Distrib.\ Comput.\ Syst.},
  Jun. 2011, pp. 889--898.

\bibitem{ji2014method}
S.~Ji, Y.~Sun, and J.~Shen, ``A method of data recovery based on compressive
  sensing in wireless structural health monitoring,'' \emph{{Math. Probl.
  Eng.}}, vol. 2014, Jan. 2014.

\bibitem{thadikemalla2017simple}
V.~S.~G. Thadikemalla and A.~S. Gandhi, ``A simple and efficient data loss
  recovery technique for {SHM} applications,'' \emph{{Smart Mater. Struct.}},
  vol.~20, no.~1, pp. 35--42, Jul. 2017.

\bibitem{carlavan2012satellite}
M.~Carlavan, L.~Blanc-F{\'e}raud, M.~Antonini, C.~Thiebaut, C.~Latry, and
  Y.~Bobichon, ``A satellite imaging chain based on the compressed sensing
  technique,'' in \emph{Proc.\ On-Board Payload Data Compress. Workshop}, Oct.
  2012.

\bibitem{scalise2008measurement}
S.~Scalise, H.~Ernst, and G.~Harles, ``Measurement and modeling of the land
  mobile satellite channel at {Ku}-band,'' \emph{IEEE Trans. Veh. Technol.},
  vol.~57, no.~2, pp. 693--703, Mar. 2008.

\bibitem{zangwill1969nonlinear}
W.~I. Zangwill, \emph{Nonlinear programming: a unified approach}.\hskip 1em
  plus 0.5em minus 0.4em\relax Upper Saddle River,NJ: Prentice-Hall, Englewood
  Cliffs, 1969.

\bibitem{little2019statistical}
R.~J. Little and D.~B. Rubin, \emph{Statistical analysis with missing
  data}.\hskip 1em plus 0.5em minus 0.4em\relax John Wiley \& Sons, 2019.

\bibitem{kullback1951information}
S.~Kullback and R.~A. Leibler, ``On information and sufficiency,'' \emph{Ann.
  Math. Stat.}, vol.~22, no.~1, pp. 79--86, Mar. 1951.

\bibitem{arnuautu2003optimal}
V.~Arn{\u{a}}utu and P.~Neittaanm{\"a}ki, \emph{Optimal control from theory to
  computer programs}.\hskip 1em plus 0.5em minus 0.4em\relax Springer Science
  \& Business Media, 2003, vol. 111.

\bibitem{schott2016matrix}
J.~R. Schott, \emph{Matrix analysis for statistics}.\hskip 1em plus 0.5em minus
  0.4em\relax John Wiley \& Sons, 2016.

\bibitem{strichartz2000way}
R.~S. Strichartz, \emph{The way of analysis}.\hskip 1em plus 0.5em minus
  0.4em\relax Jones \& Bartlett Learning, 2000.

\end{thebibliography}
\end{document}